\documentclass[10pt]{article}
% May 17, 2010
\usepackage{amsmath,amsthm}
\setlength{\textwidth}{4.5in}
\setlength{\textheight}{7.125in}
\setlength{\parskip}{.07in}

\newtheorem{theorem}{Theorem}[section]
\newtheorem{lemma}[theorem]{Lemma}
\newtheorem{conjecture}[theorem]{Conjecture}
\newtheorem{example}{Example}[section]
\newcommand{\LP}{\ensuremath{\mathit{LP}}}
\newcommand{\LPF}{\ensuremath{\mathit{LP_f}}}
\newcommand{\NF}{\ensuremath{\mathit{N_f}}}

\begin{document}

\title{\bf Putting Dots in Triangles}
\author{\Large Simon~R.~Blackburn
%\thanks{Research supported by EPSRC grant EP/D053285/1}
\\Royal Holloway, University of
London\\Egham, Surrey TW20 0TN, United Kingdom 
\and \Large Maura~B.~Paterson
%\thanks{Research supported by EPSRC grant EP/D053285/1}
\\Birkbeck College, University of
London\\Malet Street, London WC1E 7HX, United Kingdom \and
\Large Douglas~R.~Stinson
%\thanks{Research supported by NSERC grant 203114-06}
\\David R. Cheriton School of Computer Science\\University of
Waterloo, Waterloo, ON, N2L 3G1, Canada}
\date{}
\maketitle
\thispagestyle{empty}
\pagestyle{empty}

\begin{abstract}
Given a right-angled triangle of squares in a grid whose horizontal and vertical sides are $n$ squares long, let $N(n)$ denote the maximum number of dots that can be placed into the cells of the triangle such that each row, each column, and each diagonal parallel to the long side of the triangle contains at most one dot. It has been proven that $\NF(n) = \left\lfloor \frac{2n+1}{3} \right\rfloor$. In this note, we give a new proof of the upper bound 
$\NF(n) \leq \left\lfloor \frac{2n+1}{3} \right\rfloor$ using linear programming techniques.
\end{abstract}

\section{The problem}
Consider a `triangle' of squares in a grid whose sides are $n$ squares long, as illustrated by the diagram in Figure \ref{7.fig}, for which $n=7$.
\begin{figure}
\caption{A triangle of size 7}
\label{7.fig}
\begin{equation*}
\setlength{\unitlength}{.5mm}
\begin{picture}(100,90)
\put(10,10){\line(1,0){70}}
\put(10,20){\line(1,0){70}}
\put(20,30){\line(1,0){60}}
\put(30,40){\line(1,0){50}}
\put(40,50){\line(1,0){40}}
\put(50,60){\line(1,0){30}}
\put(60,70){\line(1,0){20}}
\put(70,80){\line(1,0){10}}

\put(10,10){\line(0,1){10}}
\put(20,10){\line(0,1){20}}
\put(30,10){\line(0,1){30}}
\put(40,10){\line(0,1){40}}
\put(50,10){\line(0,1){50}}
\put(60,10){\line(0,1){60}}
\put(70,10){\line(0,1){70}}
\put(80,10){\line(0,1){70}}

\put(40,2){\makebox(10,5){$\longleftarrow$ $n$ $\longrightarrow$ }}
\end{picture}
\end{equation*}
\end{figure}
We call a southwest-to-northeast diagonal a {\it standard diagonal}.
Note that our triangle consists of all the cells in an $n \times n$ square
that lie on or below the longest standard diagonal.

We denote by $N(n)$ the maximum number of dots that can be placed into the cells of the triangle such that each row, each column, and each standard diagonal
contains at most one dot.
Determining $N(n)$ is equivalent to solving the following problem:
Suppose we have a chessboard made up of hexagonal cells arranged in
the shape of an equilateral triangle of side $n$. Then $N(n)$ is the
maximum number of non-attacking queens that can be placed on such a
board, where a queen can move in any one of the three directions allowed
on a hexagonal grid.
The following theorem was proven by Vaderlind, Guy and Larson 
\cite[Problem 252]{VGL} and independently by Nivasch and Lev \cite{NL}:

\begin{theorem} 
\label{main.thm}
$N(n) = \NF(n)$, where
\begin{eqnarray*}
\NF(3t) &=& 2t \\
\NF(3t+1) &=& 2t+1 \\
\NF(3t+2) &=& 2t+1. 
\end{eqnarray*}
\end{theorem}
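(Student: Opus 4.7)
Since the lower bound $N(n) \geq \NF(n)$ is established by the explicit constructions of~\cite{VGL,NL}, the plan is to prove the upper bound $N(n) \leq \lfloor (2n+1)/3 \rfloor$ by linear programming duality. Introduce a variable $x_c \geq 0$ for each cell $c$ of the triangle, with the linear constraints $\sum_{c \in L} x_c \leq 1$ for every row, column, or standard diagonal $L$. Any valid placement of dots is integer feasible, so the LP optimum bounds $N(n)$ from above. By LP duality it suffices to exhibit nonnegative line weights $(w_L)$ that cover every cell (that is, $\sum_{L \ni c} w_L \geq 1$) and whose total $\sum_L w_L$ is at most $(2n+1)/3$; integrality of $N(n)$ then yields $N(n) \leq \lfloor (2n+1)/3 \rfloor = \NF(n)$.

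The key combinatorial input is an identity: for every cell $(x,y)$ in the triangle the three lines through it have lengths $x$, $n-y+1$, and $n-x+y$, always summing to $2n+1$. Passing to barycentric coordinates $u = y-1$, $v = x-y$, $w = n-x$ (which satisfy $u+v+w = n-1$, and which index the row, diagonal, and column through the cell respectively) the three parallel classes become completely symmetric, so the cover problem reduces to finding a single function $f\colon\{0,\dots,n-1\}\to\mathbb{R}_{\geq 0}$ with $f(u)+f(v)+f(w) \geq 1$ whenever $u+v+w = n-1$, and the total weight of the cover is then $3\sum_k f(k)$.

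I shall take $f(k) = \max\{0,(m-k)/s\}$, where $s = 3m - n + 1$ and $m$ is an integer lying in the length-$1$ interval $[(2n-2)/3,\,(2n+1)/3]$ (such an integer always exists). If all of $u,v,w$ fall in $\{0,\dots,m-1\}$, then $f(u)+f(v)+f(w) = (3m-(u+v+w))/s = (3m-n+1)/s = 1$. The remaining cases, in which one or two of the indices lie in the truncated region $\{m,\dots,n-1\}$, follow the same pattern: since $m > (n-1)/3$, the constraint $u+v+w = n-1$ forces the untruncated indices to be small enough that $f$ of them still sums to at least $1$. A direct calculation gives $3\sum_k f(k) = 3m(m+1)/(2s)$, and the inequality $3m(m+1)/(2s) \leq (2n+1)/3$ rearranges to a quadratic in $m$ whose roots are exactly $(2n-2)/3$ and $(2n+1)/3$; our choice of $m$ therefore yields the desired bound on the total weight.

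The main work is the case analysis verifying the covering inequality when some of $u,v,w$ exceed $m-1$; once that is in place the algebraic estimate $3m(m+1)/(2s) \leq (2n+1)/3$ is routine. The whole argument hinges on the identity $u+v+w=n-1$: without it the natural symmetric linear weighting only delivers the weaker bound $3n/4$, whereas the piecewise-linear truncation at $m$ tightens this to $(2n+1)/3$ and thereby closes the integer gap.
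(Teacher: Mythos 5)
Your argument is correct and follows essentially the same route as the paper: relax to an LP, exhibit an explicit dual feasible solution whose weights are truncated linear functions of the line lengths (exploiting the identity that the three lengths through a cell sum to $2n+1$), and combine weak duality with integrality of $N(n)$. The only substantive differences are cosmetic --- you unify the three residue classes of $n$ into a single formula via the parameter $m$ and keep the weights symmetric across rows, columns and diagonals (the paper breaks this symmetry when $n\equiv 1 \pmod 3$), and you bound the dual objective by $(2n+1)/3$ instead of evaluating it exactly; note also that the case analysis you flag as the main work is unnecessary, since $f(k)\ge (m-k)/s$ for every $k$ makes the covering inequality immediate from $u+v+w=n-1$.
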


Note that the value of $\NF(n)$ can be stated more succinctly as follows:
\[ \NF(n) = \left\lfloor \frac{2n+1}{3} \right\rfloor .\]

In order to prove Theorem \ref{main.thm}, we require a construction to 
establish the lower bound $N(n) \geq \NF(n)$ as well as a proof of the upper bound
$N(n) \leq \NF(n)$. In \cite{NL,VGL}, the upper bound was proven
by elementary combinatorial arguments. In this note, we give a new proof of the upper bound
using linear programming techniques. In the end, our proof is also combinatorial;
the main contribution we make is to demonstrate the use of linear programming techniques
in deriving the proof.

\section{The lower bound}

Before proving the upper bound, we give a construction to show  that $N(n) \geq \NF(n)$.
This construction is essentially the same as the ones in \cite{NL,VGL}.

\begin{theorem}\label{thm:construction}
$N(n) \geq \NF(n)$.
\end{theorem}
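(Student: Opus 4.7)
My plan is to exhibit, for each $n$, an explicit set of $\NF(n)$ cells of the triangle satisfying the three non-attacking conditions, treating the three residue classes $n \equiv 0, 1, 2 \pmod{3}$ in parallel. The construction is essentially that of~\cite{NL,VGL}.

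Label the cells of the triangle by pairs $(c,r)$ with $1 \le r \le c \le n$, where $c$ is the column from the left and $r$ is the row from the bottom; a standard diagonal then consists of cells of constant $d = c-r$. It is convenient to reindex each cell by the triple $(a,b,e) = (r,\, n+1-c,\, c-r+1)$, which gives a bijection between cells of the triangle and triples of positive integers summing to $n+2$. In these coordinates the rows, columns, and standard diagonals are precisely the fibers of the three coordinate projections, so a valid configuration of $m$ non-attacking dots is exactly a collection of $m$ triples of positive integers summing to $n+2$ whose $a$-, $b$-, and $e$-coordinates are pairwise distinct. This reformulation makes the threefold symmetry of the problem manifest and is where linear-programming intuition would suggest looking.

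For each residue class I would then write down an explicit list of $\NF(n)$ cells as the union of two ``interleaved'' staircases of the form $(c,r) = (\alpha + 2k,\, \gamma + k)$: both have column step $2$ and row step $1$, but their offsets $(\alpha,\gamma)$ are chosen so that the two staircases together use disjoint sets of rows, columns, and diagonals, jointly accounting for exactly $\NF(n)$ values in each family. For $n = 3t+1$ the natural choice is a staircase of length $t+1$ starting at $(t+1,1)$ together with a second staircase of length $t$ starting at $(t+2,t+2)$; one checks that the resulting row, column, and diagonal sets are $\{1,\ldots,2t+1\}$, $\{t+1,\ldots,3t+1\}$, and $\{0,\ldots,2t\}$ respectively. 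Verifying the construction in each case then reduces to three elementary arithmetic checks: that every cell satisfies $1\le r\le c\le n$, and that the three coordinate families are each sets of $\NF(n)$ distinct integers.

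I expect the main obstacle to be the boundary bookkeeping, which depends sensitively on the residue class. The case $n = 3t+1$ fits exactly, and $n = 3t+2$ fits with a little slack, so the analogous two-staircase patterns go through almost verbatim. The case $n \equiv 0 \pmod 3$ is the tightest: since $\NF(3t) = 2t$ is one less than $\NF(3t+1)$, one of the staircases must be shortened by one cell and the offsets of the other perturbed to keep all three coordinate families disjoint, and a handful of very small cases (small $t$) may have to be dispatched by inspection. Once the formulas are in place, the verification is entirely routine.
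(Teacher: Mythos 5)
Your construction for $n=3t+1$ is correct, and it is essentially the paper's: the paper also uses two staircases of step ``two right, one up,'' of lengths $t+1$ and $t$, starting at the leftmost cell of the row with $2t+1$ cells and at the $(t+2)$nd cell of the bottom row respectively. That configuration occupies exactly the same row set $\{1,\dots,2t+1\}$, column set $\{t+1,\dots,3t+1\}$ and diagonal set $\{0,\dots,2t\}$ as yours; indeed your configuration is the image of the paper's under the symmetry of the triangle that fixes columns and interchanges rows with standard diagonals (in your $(a,b,e)$-coordinates, swapping $a$ and $e$). Your reindexing by triples of positive integers summing to $n+2$ is sound and is the same observation as Lemma \ref{sum.lem}.

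The gap is in the other two residue classes: you never write down the configurations for $n\equiv 0$ and $n\equiv 2\pmod 3$, only assert that suitably perturbed staircases will work, and you explicitly anticipate shortening a staircase, adjusting offsets, and checking small cases by inspection. As submitted, those cases are therefore unproved. They can be completed along your lines (for $n=3t$, for instance, the two staircases of length $t$ starting at $(t+1,1)$ and $(t+2,t+2)$ do the job), but the paper sidesteps this bookkeeping entirely: having built the side-$(3t+1)$ configuration, it gets $N(3t+2)\ge N(3t+1)\ge 2t+1$ by appending an empty bottom row (a side-$(3t+1)$ triangle sits inside a side-$(3t+2)$ one with rows, columns and standard diagonals mapping to rows, columns and standard diagonals), and $N(3t)\ge N(3t+1)-1\ge 2t$ by deleting the bottom row of the side-$(3t+1)$ triangle, which destroys at most one dot since each row carries at most one. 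You should either adopt this one-line reduction or actually supply the explicit formulas you promise.
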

\begin{proof}
First, we show that {$N(3t+1) \geq 2t+1$}:
\begin{enumerate}
\item Place a dot in the {leftmost cell} of the {$(2t+1)$st row} (where we number rows
from top to bottom).
\item Place $t$ more dots, each two squares to the right and one square up from the previous dot.
\item Place a dot in the {$(t+2)$nd cell from the left} in the {bottom row}.
\item Place $t-1$ more dots, each two squares to the right and one square up from the previous dot.
\end{enumerate}
It is easily verified that at most one dot is contained in each row, column, or 
standard diagonal.

Next, {$N(3t+2) \geq N(3t+1) \geq 2t+1$} (it suffices to add a row of empty cells).
Finally, {$N(3t) \geq N(3t+1) - 1 \geq 2t$} (delete the bottom row
of cells from a triangle of side $3t+1$, noting that any row contains at most one dot).
\end{proof}

\begin{example}{\rm
We show in Figure \ref{fig:n=7} that $N(7) \geq 5$ by applying the construction given in
Theorem \ref{thm:construction}.}
\end{example}

\section{A new proof of the upper bound}

\begin{figure}[tb]
\caption{$N(7) \geq 5$}
\label{fig:n=7}

\begin{equation*}
\setlength{\unitlength}{.5mm}
\begin{picture}(90,90)
\put(10,10){\line(1,0){70}}
\put(10,20){\line(1,0){70}}
\put(20,30){\line(1,0){60}}
\put(30,40){\line(1,0){50}}
\put(40,50){\line(1,0){40}}
\put(50,60){\line(1,0){30}}
\put(60,70){\line(1,0){20}}
\put(70,80){\line(1,0){10}}

\put(10,10){\line(0,1){10}}
\put(20,10){\line(0,1){20}}
\put(30,10){\line(0,1){30}}
\put(40,10){\line(0,1){40}}
\put(50,10){\line(0,1){50}}
\put(60,10){\line(0,1){60}}
\put(70,10){\line(0,1){70}}
\put(80,10){\line(0,1){70}}

\put(35,35){\circle*{3}}
\put(55,45){\circle*{3}}
\put(75,55){\circle*{3}}
\put(45,15){\circle*{3}}
\put(65,25){\circle*{3}}

\end{picture}
\end{equation*}
\end{figure}

The computation of $N(n)$ can be formulated as an 
{integer linear program}. Suppose we number the cells
as indicated in the Figure \ref{6.fig} (where $n=6$):
\begin{figure}
\caption{Labelling the cells in a triangle of size $6$}
\label{6.fig}
\begin{equation*}
\setlength{\unitlength}{.7mm}
\begin{picture}(75,70)
\put(10,10){\line(1,0){60}}
\put(10,20){\line(1,0){60}}
\put(20,30){\line(1,0){50}}
\put(30,40){\line(1,0){40}}
\put(40,50){\line(1,0){30}}
\put(50,60){\line(1,0){20}}
\put(60,70){\line(1,0){10}}

\put(10,10){\line(0,1){10}}
\put(20,10){\line(0,1){20}}
\put(30,10){\line(0,1){30}}
\put(40,10){\line(0,1){40}}
\put(50,10){\line(0,1){50}}
\put(60,10){\line(0,1){60}}
\put(70,10){\line(0,1){60}}

\put(11,13){{$x_{6,6}$}}
\put(21,13){{$x_{6,5}$}}
\put(31,13){{$x_{6,4}$}}
\put(41,13){{$x_{6,3}$}}
\put(51,13){{$x_{6,2}$}}
\put(61,13){{$x_{6,1}$}}

\put(21,23){{$x_{5,5}$}}
\put(31,23){{$x_{5,4}$}}
\put(41,23){{$x_{5,3}$}}
\put(51,23){{$x_{5,2}$}}
\put(61,23){{$x_{5,1}$}}

\put(31,33){{$x_{4,4}$}}
\put(41,33){{$x_{4,3}$}}
\put(51,33){{$x_{4,2}$}}
\put(61,33){{$x_{4,1}$}}

\put(41,43){{$x_{3,3}$}}
\put(51,43){{$x_{3,2}$}}
\put(61,43){{$x_{3,1}$}}

\put(51,53){{$x_{2,2}$}}
\put(61,53){{$x_{2,1}$}}

\put(61,63){{$x_{1,1}$}}

\end{picture}
\end{equation*}
\end{figure}

Define $x_{i,j} = 1$ if the corresponding cell contains a dot;
define $x_{i,j} = 0$ otherwise.
The sum of the variables in each row, column, and standard diagonal is at most 1.  
This leads to   {constraints} of the form
\begin{equation*}
\sum_{j=1}^{i}x_{i,j}\leq 1, \quad \quad \mbox{for } i=1,2,\dotsc,n 
\end{equation*}

\begin{equation*}
\sum_{i=j}^{n}x_{i,j}\leq 1, \quad \quad \mbox{for } j=1,2,\dotsc,n 
\end{equation*}
and
\begin{equation*}
\sum_{i=k+1}^{n}x_{i,i-k}\leq 1, \quad \quad \mbox{for } k=0,1,\dotsc,n-1.
\end{equation*}
Finally, $x_{i,j} \in \{0,1\}$ for all $i,j$.
The {objective function} is to maximize $\sum x_{i,j}$ subject to the 
above constraints.
%\medskip
It is obvious that the optimal solution to this integer program is
$N(n)$.

It is possible to relax the integer program to obtain a linear program,  replacing the
condition $x_{i,j} \in \{0,1\}$ by 
$0 \leq x_{i,j} \leq 1$ for all $i,j$. In fact, we do not have to 
specify $x_{i,j} \leq 1$ as an explicit constraint since it is already implied by the other constraints; it suffices to require
$0 \leq x_{i,j}$ for all $i,j$. 
Denoting the optimal solution to this linear program
by $\LP(n)$, we have that $\LP(n) \geq N(n)$.

\begin{example}
{\rm
Using Maple, it can be seen that $\LP(6) = 4\frac{2}{7}$ (a solution to the LP having this
value is presented in Figure \ref{LP:n=6}).}
\end{example}

\begin{figure}
\caption{The optimal solution  to $LP(6)$}
\label{LP:n=6}
\begin{equation*}
\setlength{\unitlength}{.7mm}
\begin{picture}(90,80)
\put(10,10){\line(1,0){60}}
\put(10,20){\line(1,0){60}}
\put(20,30){\line(1,0){50}}
\put(30,40){\line(1,0){40}}
\put(40,50){\line(1,0){30}}
\put(50,60){\line(1,0){20}}
\put(60,70){\line(1,0){10}}

\put(10,10){\line(0,1){10}}
\put(20,10){\line(0,1){20}}
\put(30,10){\line(0,1){30}}
\put(40,10){\line(0,1){40}}
\put(50,10){\line(0,1){50}}
\put(60,10){\line(0,1){60}}
\put(70,10){\line(0,1){60}}

\put(14,13){{$0$}}
\put(24,13){{$0$}}
\put(34,13){{$\frac{2}{7}$}}
\put(44,13){{$\frac{4}{7}$}}
\put(54,13){{$\frac{1}{7}$}}
\put(64,13){{$0$}}

\put(24,23){{$\frac{2}{7}$}}
\put(34,23){{$0$}}
\put(44,23){{$\frac{3}{7}$}}
\put(54,23){{$\frac{1}{7}$}}
\put(64,23){{$\frac{1}{7}$}}

\put(34,33){{$\frac{5}{7}$}}
\put(44,33){{$0$}}
\put(54,33){{$0$}}
\put(64,33){{$\frac{2}{7}$}}

\put(44,43){{$0$}}
\put(54,43){{$\frac{5}{7}$}}
\put(64,43){{$\frac{2}{7}$}}

\put(54,53){{$0$}}
\put(64,53){{$\frac{2}{7}$}}

\put(64,63){{$0$}}

\end{picture}
\end{equation*}

\end{figure}

Next, we tabulate some solutions to $\LP(n)$ for small values of $n$
 in Table \ref{tab1}. Based on the numerical data in Table \ref{tab1}, it is natural to formulate 
a conjecture about  $\LP(n)$:
 
\begin{table}[tb]
\caption{Optimal solutions to the integer and linear programs for small $n$}
\label{tab1}
\[
\renewcommand{\arraystretch}{1.35}
\begin{array}{c|c|c|c}
n & N(n) & \LP(n) & \LP(n) - N(n)\\ \hline
3 & 2 & 2\frac{1}{4} & \frac{1}{4} \\ \hline
4 & 3 & 3 & 0 \\ \hline
5 & 3 & 3\frac{3}{5} & \frac{3}{5} \\\hline
6 & 4 & 4\frac{2}{7} & \frac{2}{7} \\\hline
7 & 5 & 5 & 0 \\\hline
8 & 5 & 5\frac{5}{8} & \frac{5}{8} \\\hline
9 & 6 & 6\frac{3}{10} & \frac{3}{10} \\\hline
10 & 7 & 7 & 0 \\\hline
11 & 7 & 7\frac{7}{11} & \frac{7}{11} \\\hline
12 & 8 & 8\frac{4}{13} & \frac{4}{13} 
\end{array}
\]
\end{table}

\begin{conjecture}[LP Conjecture] 
Define
\begin{eqnarray*}
\LPF(3t) &=& 2t + \frac{t}{3t+1} \\
\LPF(3t+1) &=& 2t+1 \\
\LPF(3t+2) &=& 2t+1 + \frac{2t+1}{3t+2}.
\end{eqnarray*} Then we conjecture that $\LP(n) = \LPF(n)$.
\end{conjecture}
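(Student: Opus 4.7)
The plan is to prove the LP Conjecture by LP duality: I would construct both a dual feasible solution and a primal feasible solution of value $\LPF(n)$, and then appeal to the strong duality theorem.

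First, I would write down the dual LP: non-negative weights $r_i$, $c_j$, $d_k$ on the rows, columns and standard diagonals, subject to $r_i + c_j + d_{i-j} \geq 1$ at every cell $(i,j)$, minimising the total weight. Exploiting the natural 3-fold rotational symmetry of the triangle, under which $\text{row } i$, $\text{col } (n{-}i{+}1)$ and $\text{diag } (n{-}i)$ form an orbit of lines each of length $i$, I would try the symmetric ansatz
\[
r_i = \frac{\max(0,\, i - s)}{2n+1-3s}, \quad c_j = \frac{\max(0,\, n-j+1 - s)}{2n+1-3s}, \quad d_k = \frac{\max(0,\, n-k-s)}{2n+1-3s},
\]
with threshold $s = \lfloor (n+1)/3 \rfloor$ (so $s = t$ for $n \in \{3t, 3t+1\}$ and $s = t+1$ for $n = 3t+2$). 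Introducing $a = i$, $b = n-j+1$, $c = n-i+j$, the identity $a+b+c = 2n+1$ with $a,b,c \in [1,n]$ reduces the feasibility condition at cell $(i,j)$ to
\[
\max(0, a-s) + \max(0, b-s) + \max(0, c-s) \geq 2n+1-3s,
\]
which I would verify by a short case analysis on how many of $a,b,c$ lie below $s$ (the minimum is $2n+1-3s$, attained precisely when all three are at least $s$). A direct computation then shows that the total dual weight equals $3(n-s)(n-s+1)/\bigl(2(2n+1-3s)\bigr)$, which specialises to $\LPF(n)$ in each of the three residue classes, proving $\LP(n) \leq \LPF(n)$.

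For the matching lower bound, I would construct an explicit primal. Complementary slackness predicts the support: cells where $r_i + c_j + d_{i-j} = 1$, which by the above are exactly those with $a,b,c \geq s$, i.e.\ the original triangle with its three corner sub-triangles of side $s-1$ excised. When $n \equiv 1 \pmod 3$ we have $\LPF(n) = N(n)$ and the integer construction of Theorem~\ref{thm:construction} already provides a feasible primal of the correct value. When $n \equiv 0$ or $2 \pmod 3$ the optimum is strictly fractional; after symmetrising over the rotation the free unknowns are reduced to $O(n)$, and I would then solve the resulting small linear system on the truncated region to write down an explicit $x$ whose row, column and diagonal sums each equal $1$ on every line of positive dual weight and whose total is $\LPF(n)$.

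The main obstacle will be this last step. While the dual has a uniform closed form across the three residue classes, the fractional primal appears to require separate bookkeeping for $n \equiv 0$ and $n \equiv 2 \pmod 3$; in each case, checking feasibility and the correct total reduces to a handful of telescoping identities involving the denominator $2n+1-3s$. The solution for $\LP(6)$ shown in Figure~\ref{LP:n=6} serves as a concrete calibration point for the construction.
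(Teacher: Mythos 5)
Your dual half is correct, and in fact slightly cleaner than the paper's: your single symmetric ansatz $\max(0,\ell-s)/(2n+1-3s)$ with $s=\lfloor (n+1)/3\rfloor$ reproduces the paper's solutions \eqref{0.soln} and \eqref{2.soln} exactly and replaces the asymmetric solution \eqref{1.soln} for $n=3t+1$ by a symmetric one with denominator $3t+3$ instead of $3t+1$ (both have value $2t+1$). The feasibility check via $a+b+c=2n+1$ is Lemma \ref{sum.lem}, and the objective-value computation matches Lemma \ref{feas.lem}. So you have correctly established $\LP(n)\leq\LPF(n)$, and, combined with the integer construction of Theorem \ref{thm:construction}, the full conjecture for $n\equiv 1\pmod 3$. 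This is precisely as far as the paper itself gets: note that the statement is labelled a \emph{Conjecture}, and the paper explicitly concedes in its Discussion that for $n\not\equiv 1\pmod 3$ it has no primal solution matching the dual value and that ``proving it could get messy.''

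The genuine gap is therefore your final step. For $n\equiv 0,2\pmod 3$ the entire remaining content of the conjecture is the existence of a nonnegative $x$, supported on the triangle with its three corner subtriangles of side $s-1$ removed, whose sum along every line of length greater than $s$ equals $1$. Your complementary-slackness analysis correctly identifies what such an $x$ must look like and correctly shows that if it exists the value is forced to be $\LPF(n)$; but ``solve the resulting small linear system'' is not a proof that the system has a \emph{nonnegative} solution. This is a three-direction transportation-type feasibility question, and such systems do not automatically admit nonnegative solutions (the Birkhoff--von Neumann phenomenon fails for three families of lines); moreover the known optimal primal solutions, such as the one for $n=6$ in Figure \ref{LP:n=6}, are irregular and do not respect the $3$-fold symmetry you propose to impose, so even the symmetrisation step needs justification (symmetrising a feasible optimal solution does preserve feasibility and value here, since the constraints are permuted by the rotation, but you should say so). Until an explicit feasible $x$ of value $\LPF(n)$ is exhibited for $n\equiv 0$ and $n\equiv 2\pmod 3$, what you have proved is only the inequality $\LP(n)\leq\LPF(n)$ --- which suffices for Theorem \ref{main.thm} but not for the LP Conjecture itself.
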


It is easy to show the following:
\begin{theorem}
\label{LPproof.thm}
{If the LP Conjecture is true, then 
$N(n) = \NF(n)$.}
\end{theorem}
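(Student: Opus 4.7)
The plan is to combine the construction from Theorem~\ref{thm:construction} with the obvious fact that the linear program is a relaxation of the integer program. Since any feasible solution to the IP is feasible for the LP, we have $N(n)\leq \LP(n)$, and because $N(n)$ is an integer, this sharpens to $N(n)\leq\lfloor \LP(n)\rfloor$. Under the LP Conjecture this becomes $N(n)\leq\lfloor \LPF(n)\rfloor$, so the whole argument reduces to checking, residue by residue modulo $3$, that $\lfloor \LPF(n)\rfloor=\NF(n)$.

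First I would handle the easy case $n=3t+1$: here $\LPF(3t+1)=2t+1$ is already an integer equal to $\NF(3t+1)$, so $\lfloor \LPF(n)\rfloor=\NF(n)$ trivially. Next, for $n=3t$ I would observe that $\LPF(3t)=2t+\tfrac{t}{3t+1}$ with $0\leq \tfrac{t}{3t+1}<1$, so $\lfloor \LPF(3t)\rfloor=2t=\NF(3t)$. Finally, for $n=3t+2$ I would check that $\tfrac{2t+1}{3t+2}<1$, whence $\lfloor \LPF(3t+2)\rfloor=2t+1=\NF(3t+2)$. In each case we therefore get $N(n)\leq \NF(n)$.

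To finish, I would invoke Theorem~\ref{thm:construction}, which gives the matching lower bound $N(n)\geq \NF(n)$. Sandwiching yields $N(n)=\NF(n)$, as required.

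The ``hard part'' here is really no part at all: once the LP Conjecture is granted, the theorem is just an integer-part verification together with the already-established construction. The genuine mathematical work is hidden in the LP Conjecture itself, whose proof is the subject of the remainder of the paper; the role of Theorem~\ref{LPproof.thm} is simply to justify that proving $\LP(n)=\LPF(n)$ is a legitimate route to Theorem~\ref{main.thm}.
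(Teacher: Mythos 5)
Your argument is correct and follows exactly the same route as the paper: use $N(n)\leq\lfloor\LP(n)\rfloor$ from the LP relaxation and integrality, apply the conjectured equality $\LP(n)=\LPF(n)$, verify $\lfloor\LPF(n)\rfloor=\NF(n)$, and close with the lower bound from Theorem~\ref{thm:construction}. The only difference is that you spell out the floor computation case by case, which the paper dismisses as ``simple arithmetic.''
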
 

\begin{proof}
First, the LP Conjecture asserts that 
\begin{equation}
\label{eq1} \LP(n) = \LPF(n).
\end{equation}
 Because $N(n)$ is an integer and $N(n) \leq \LP(n)$, 
we have that \begin{equation}
\label{eq2} N(n) \leq \lfloor \LP(n) \rfloor . \end{equation}
Simple arithmetic establishes that 
\begin{equation}
\label{eq3} \lfloor \LPF(n) \rfloor = \NF(n).\end{equation}
Combining (\ref{eq1}), (\ref{eq2}) and (\ref{eq3}), we have
\[N(n) \leq \lfloor \LP(n) \rfloor = \lfloor \LPF(n) \rfloor = \NF(n) .\]
We showed in Theorem \ref{thm:construction}
that $N(n) \geq \NF(n)$; hence $N(n) = \NF(n).$
\end{proof}

The optimal solution to the linear program for $n=6$ that we presented in
Figure \ref{LP:n=6} does not seem to have much apparent structure
that could be the basis of a mathematical proof. Indeed, most of
the small optimal solutions that we obtained are quite irregular,
which suggests that 
proving the LP conjecture could
be difficult.
We circumvent this problem by instead studying the dual LP
and appealing to weak duality.

An LP in {\it standard form} is specified as:

\begin{center}
\begin{tabular}{|ll|}\hline
maximize  & $c^T x$\rule{0in}{.18in}  \\
subject to & $Ax \leq b$, $x \geq 0$.\\ \hline
\end{tabular}
\end{center}
This is often called the {\it primal LP}.
Any vector $x$ such that $Ax \leq b$, $x \geq 0$ is called a 
{\it feasible solution}. The {\it objective function}
is the value to be maximized, namely, $c^T x$.

The corresponding {\it dual LP} is specified as:
\begin{center}
\begin{tabular}{|ll|}\hline
minimize  & $b^T y$\rule{0in}{.18in}  \\
subject to & $A^T y \geq c$, $y \geq 0$.\\ \hline
\end{tabular}
\end{center}
Here, a feasible solution is any vector $y$ such that 
$A^T y \geq c$, $y \geq 0$. The objective function
is $b^T y$.

We will use the following classic theorem.

\begin{theorem}[Weak Duality Theorem]
\label{weakduality.thm}
The objective function value of the dual LP at any feasible solution is always greater than or equal to the objective function value of the primal LP at any feasible solution.
\end{theorem}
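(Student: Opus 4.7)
The plan is to exhibit a short chain of inequalities that relates the primal objective $c^T x$ to the dual objective $b^T y$, using each of the two feasibility conditions exactly once. The natural bridge between the two objectives is the scalar quantity $y^T A x$: primal feasibility $Ax \leq b$ should bound this scalar from above by $b^T y$, while dual feasibility $A^T y \geq c$ should bound it from below by $c^T x$, provided we are careful to preserve the direction of componentwise inequalities when contracting against a nonnegative vector.

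Concretely, let $x$ be any primal feasible solution and $y$ any dual feasible solution. First I would use dual feasibility: because $A^T y \geq c$ componentwise and $x \geq 0$, taking the inner product with $x$ preserves the inequality, giving $c^T x \leq x^T (A^T y) = y^T A x$. Next I would use primal feasibility: because $Ax \leq b$ componentwise and $y \geq 0$, taking the inner product with $y$ gives $y^T A x \leq y^T b = b^T y$. Concatenating these two inequalities yields $c^T x \leq b^T y$, which is exactly the weak duality statement.

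The only subtlety — hardly a genuine obstacle — is the elementary fact that if $u \geq v$ componentwise and $w \geq 0$, then $w^T u \geq w^T v$, which follows by summing the nonnegative terms $w_i(u_i - v_i)$. This is precisely where the sign constraints $x \geq 0$ and $y \geq 0$ in the standard-form primal and dual do their work; without them, the constraint inequalities $Ax \leq b$ and $A^T y \geq c$ alone would not be enough to link $c^T x$ and $b^T y$ through the bridge $y^T A x$. Because the argument is just a two-step chain, I would not expect any obstruction beyond writing these manipulations down carefully.
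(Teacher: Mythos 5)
Your proof is correct: the two-step chain $c^T x \leq x^T(A^T y) = y^T A x \leq y^T b = b^T y$, using $x \geq 0$ with dual feasibility and $y \geq 0$ with primal feasibility, is the standard argument for weak duality. The paper itself states this theorem as a classic result and gives no proof, so there is nothing to compare against; your argument fills that gap correctly and is exactly the proof one would expect.
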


We now describe the dual LP for our problem. 
Suppose we label the rows of our triangle by $r_1,r_2,\dots,r_{n}$, such that 
{$r_i$ is the row containing $i$ squares}, 
and we label the columns and diagonals similarly.  
The following simple lemma is very useful. 

\begin{lemma}
\label{sum.lem}
If a cell is in row $r_i$, column $c_j$ and diagonal $d_k$, then
 {$i+j+k=2n+1$}.
\end{lemma}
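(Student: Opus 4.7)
The plan is to prove the lemma by a direct computation in a convenient parametrisation of the cells. I parametrise the cells by pairs $(u,v)$ with $u\in\{1,\dots,n\}$ and $v\in\{1,\dots,u\}$, where $u$ is the row counted from the top and $v$ is the position within that row counted from the right; this matches the labelling $x_{u,v}$ of Figure~\ref{6.fig}. In this parametrisation a row is a set of cells with $u$ fixed, a column is a set of cells with $v$ fixed, and a standard (SW-to-NE) diagonal is a set of cells with $u-v$ fixed.

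Next, I translate the three \emph{sizes} $i$, $j$, $k$ of the row, column and diagonal through a given cell $(u,v)$ into these coordinates. Since $r_i$ is by definition the row containing $i$ squares, and the row through $(u,v)$ contains $u$ squares, $i=u$. The column through $(u,v)$ runs from $(v,v)$ on the hypotenuse down to $(n,v)$ on the bottom side, so $j = n-v+1$. For the diagonal, the cells $(u',v')$ of the triangle with $u'-v' = u-v$ are obtained by letting $v'$ range from $1$ up to $n-(u-v)$, giving $k = n-u+v$. Summing,
\[
i+j+k \;=\; u + (n-v+1) + (n-u+v) \;=\; 2n+1,
\]
which is the claimed identity.

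There is no real obstacle: the argument is a short book-keeping exercise once an appropriate coordinate system is fixed. The only subtlety worth flagging is that the ``size'' indexing of $c_j$ and $d_k$ (the column/diagonal containing exactly $j$, respectively $k$, squares) runs in the opposite direction to the most natural cell-index coordinates for columns and diagonals, so one must be careful when converting between the two conventions; the three endpoint checks $(u,v)\in\{(1,1),(n,1),(n,n)\}$ at the corners of the triangle already suffice to confirm that the conversions above are correct.
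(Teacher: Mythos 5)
Your proof is correct: the coordinate conversions $i=u$, $j=n-v+1$, $k=n-u+v$ all match the paper's labelling and constraint indexing, and the sum telescopes to $2n+1$ as claimed. The paper states this lemma without proof, so your computation simply supplies the routine verification that the authors omitted; there is no competing approach to compare it against.
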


In fact, it is not hard to see that 
there is a {bijection} from the set of $n(n+1)/2$ cells
to the set of triples
\[ \mathcal{T} = \{ (i,j,k) : i+j+k = 2n+1, 1 \leq i,j,k\leq n\} .\]

In the dual LP, the {variables} are $r_1,r_2,\dots,r_{n}$,  
 $c_1,c_2,\dots,c_{n}$, $d_1,d_2,\dots, d_{n}$.
There is a {constraint} for 
each cell $C$. If $C$ is in row $r_i$, column $c_j$ and diagonal $d_k$,
then the corresponding constraint is
\[ r_i + c_j +d_k \geq 1.\]
 The {objective function} is to minimize $\sum r_i + \sum c_j + \sum d_k$.

\medskip

It turns out that there exist optimal solutions for the dual LP that have a very {simple, regular} structure. These were obtained by extrapolating solutions for small cases found by Maple.

When $n = 3t+1$, define 
\begin{equation}
\label{1.soln}
r_i = c_i =\max\left\{0,\frac{i-t-1}{3t+1}\right\},\quad
d_i=\max\left\{0,\frac{i-t}{3t+1}\right\}.
\end{equation}

When $n = 3t+2$, define
\begin{equation}
\label{2.soln}
r_i= c_i = d_i =\max\left\{0,\frac{i-t-1}{3t+2}\right\}.
\end{equation}

When $n = 3t$, define
\begin{equation}
\label{0.soln}
r_i = c_i = d_i =\max\left\{0,\frac{i-t}{3t+1}\right\}.
\end{equation}

\begin{lemma}
\label{feas.lem}
The values $r_i, c_i$ and $d_i$ defined in 
\eqref{1.soln}, \eqref{2.soln} and \eqref{0.soln}
are {feasible} for the dual LP, and
the {value of the objective function} for the dual LP at these solutions is
$\LPF(n)$.
\end{lemma}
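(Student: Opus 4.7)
The plan is to split into the three cases $n=3t$, $n=3t+1$, $n=3t+2$ and verify feasibility and the objective value separately in each. Non-negativity of the dual variables is immediate from the $\max\{0,\cdot\}$ in every definition, so the substantive work is checking $r_i+c_j+d_k\geq 1$ for every triple $(i,j,k)\in\mathcal{T}$ and then evaluating $\sum r_i+\sum c_j+\sum d_k$.

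For feasibility, the key design feature is that in each case the denominators are chosen so that whenever none of the three variables is clipped to zero, substituting the affine pieces into $r_i+c_j+d_k$ and using $i+j+k=2n+1$ from Lemma \ref{sum.lem} collapses the sum to exactly $(3t+\beta)/(3t+\beta)=1$ for the appropriate $\beta$. When some of the three variables are clipped, the corresponding index is small, which by $i+j+k=2n+1$ forces the remaining indices upward, so the unclipped terms still sum to at least $1$. I would enumerate the subcases according to how many of $r_i,c_j,d_k$ are clipped. For example, in the case $n=3t+1$ with $r_i=0$, one has $i\leq t+1$ so $j+k\geq 5t+2$, whence $c_j+d_k\geq (3t+1)/(3t+1)=1$; the other subcases are analogous. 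In the case $n=3t$ the doubly-clipped subcase does not even arise, since $i,j\leq t$ would force $k\geq 4t+1>n$, contradicting $(i,j,k)\in\mathcal{T}$.

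For the objective, each component sum reduces to $\sum_{i=\tau+1}^{n}(i-\tau)/D$ for a threshold $\tau$ and denominator $D$, which is a truncated arithmetic progression summable by the standard $m(m+1)/2$ formula. I would combine the row, column and diagonal contributions in each residue class and compare the result with $\LPF(n)$ over a common denominator. For instance, when $n=3t+1$ the row and column sums each contribute $t(2t+1)/(3t+1)$ while the diagonal sum contributes $(t+1)(2t+1)/(3t+1)$, giving a total of $(2t+1)(3t+1)/(3t+1)=2t+1=\LPF(3t+1)$; the other two cases are treated in exactly the same way.

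The main obstacle is really only bookkeeping: three cases, each with a handful of clipping subcases for feasibility and three arithmetic progressions for the objective. None of the individual steps is hard, and the common underlying principle --- that the affine pieces are engineered so that each dual constraint is tight when all three variables are positive --- makes every verification transparent once the summation ranges and denominators are correctly pinned down.
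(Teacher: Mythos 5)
Your proposal is correct and follows essentially the same route as the paper: dual feasibility is deduced from Lemma \ref{sum.lem}, and the objective value is computed via arithmetic-progression sums, exactly as in the paper's proof. The only difference is that your subcase analysis on which variables are clipped is unnecessary --- since $\max\{0,x\}\geq x$ always holds, the paper simply lower-bounds each dual variable by its affine piece and concludes $r_i+c_j+d_k\geq 1$ in a single chain of inequalities, with no case distinctions.
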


\begin{proof}
First we consider the case $n = 3t+1$.  Consider any cell $C$, and suppose $C$ is in row $r_i$, column $c_j$ and diagonal $d_k$.
Then 
we have  that
\begin{align*}
r_i+c_j+d_{k}&\geq\frac{i-t-1}{3t+1}+\frac{j-t-1}{3t+1}+\frac{k-t}{3t+1}\\
&=\frac{i+j+k- (3t+2)}{3t+1} \\
&=\frac{6t+3 - (3t+2)}{3t+1} \quad \text{(applying Lemma \ref{sum.lem})}\\
&=1.
\end{align*}
Therefore all constraints are satisfied.  The value of the objective function is 
\begin{align*}
&\frac{1}{3t+1}\left(%
\sum_{i=t+1}^{3t+1}(i-t-1)+
\sum_{i=t+1}^{3t+1}(i-t-1)+
\sum_{i=t}^{3t+1}(i-t)
\right)\\
&= \frac{1}{3t+1}\left(\frac{2t(2t+1)}{2}+\frac{2t(2t+1)}{2}+
\frac{(2t+1)(2t+2)}{2}\right)\\
&= \frac{(2t+1)(3t+1)}{3t+1}\\
&= 2t+1\\
&= \LPF(3t+1).
\end{align*}

The proofs for $n = 3t+2$ and $n= 3t$ are very similar. 
\end{proof}

Our new proof of Theorem 
\ref{main.thm} follows immediately from Lemma \ref{feas.lem}
by slightly modifying the proof of
Theorem \ref{LPproof.thm}.

\begin{proof}
First, from weak duality and Lemma \ref{feas.lem}, we have
\begin{equation*}
\LP(n) \leq \LPF(n).
\end{equation*}
 Combining this inequality with (\ref{eq2}) and (\ref{eq3}), we have
\[N(n) \leq \lfloor \LP(n) \rfloor \leq \lfloor \LPF(n) \rfloor = \NF(n) .\]
We showed in Theorem \ref{thm:construction}
that $N(n) \geq \NF(n)$; hence $N(n) = \NF(n).$
\end{proof}

\section{Discussion}

We investigated 
the ``dots in triangles problem" due to an application 
to honeycomb arrays (see \cite{BPPS}).
However, we did not
realize that the dots in triangles problem had already been solved. 
Since we did not know the
value of $N(n)$, we adopted an ``experimental" approach:
\begin{enumerate}
\item 
We used Maple to gather some numerical data.
\item We formulated (obvious) conjectures
based on the numerical data.
\item Finally, we proved the conjectures mathematically. 
\end{enumerate}
Many problems in combinatorics are amenable to such
an approach, but this particular problem serves as an ideal
illustration of the usefulness of
this methodology. Indeed, the problem seemed almost
to ``solve itself'', with minimal thought or human ingenuity required! 

\medskip

It should also be emphasized that, in the end, the resulting proof is quite short and simple:
\begin{enumerate}
\item By a suitable direct construction, prove that 
$N(n) \geq \left\lfloor \frac{2n+1}{3} \right\rfloor $.
\item Show that the dual LP has a feasible solution whose objective
function value is {less than} 
$\left\lfloor \frac{2n+1}{3} \right\rfloor + 1$.
\end{enumerate}

The first conjecture we posed was the LP Conjecture, concerning the 
optimal solutions to the LP. 
In general, to prove a feasible solution to an LP is optimal, 
it is necessary to do the following:
\begin{enumerate}
\item Find a feasible solution to the primal LP and denote the value of the  objective function by $C$.
\item Find a feasible solution to the dual LP and denote the value of the  objective function by $C^*$.
\end{enumerate}
If $C = C^*$, then the solution to the LP is optimal (this is often called
{\it strong duality}).

When $n \equiv 1 \bmod 3$, our work in fact proves the LP conjecture.
This is because Theorem \ref{thm:construction} yields a solution to the 
primal LP whose objective function value matches the solution we later found to the dual LP.
However, when $n \not\equiv 1 \bmod 3$, we do not have a general solution to the primal LP 
whose objective function value matches the solution to the dual LP.
Although we are confident that the LP conjecture is
also true for these values of $n$, proving it could get messy!

\section*{Acknowledgements}

Research of SRB and MBP was supported by EPSRC grant EP/D053285/1 and
research of DRS was supported by NSERC grant 203114-06. The authors
would like to thank Tuvi Etzion for discussions, funded by a Royal
Society International Travel Grant, which inspired this line of
research. Many thanks also to Bill Martin, for comments on an earlier
version of this paper.

\end{document}